\theoremstyle{remark}
\begin{document}

\newtheorem{prop}{Proposition}

\title{Monogamy inequality for any local 
quantum resource and entanglement}
\author{S. Camalet}
\affiliation{Laboratoire de Physique Th\'eorique 
de la Mati\`ere Condens\'ee, UMR 7600, Sorbonne 
Universit\'es, UPMC Univ Paris 06, F-75005, 
Paris, France}

\begin{abstract}
We derive a monogamy inequality for any local 
quantum resource and entanglement. It results from 
the fact that there is always a convex measure for 
a quantum resource, as shown here, and from 
the relation between entanglement and local entropy. 
One of its consequences is an entanglement 
monogamy different from that  usually discussed. 
If the local resource is nonuniformity or coherence, 
it is satisfied by familiar resource and entanglement 
measures. The ensuing upper bound for the local 
coherence, determined by the entanglement, 
is independent of the basis used to define the coherence.
\end{abstract} 

\maketitle 

The more a two-level system is quantum-mechanically 
entangled with another two-level system, the less it 
can be entangled with a third one \cite{CKW}. 
This behavior, known as entanglement monogamy, 
has also been found for larger systems 
\cite{KW,OV,BXW}. Quantitatively, it is described by 
an inequality, involving a bipartite entanglement 
monotone. The term ''monotone'' refers to the fact 
that a proper measure of entanglement, cannot 
increase under specific transformations of quantum 
states. They are those that can be achieved by local 
operations and classical communication, and hence, 
cannot generate entanglement \cite{V,PV,HHHH}. 

Quantum entanglement theory is a quantum resource 
theory. This approach to quantum resources, is based 
on the definition of free states, for which the resource 
vanishes, and free operations. The set of 
these allowed operations, depends on the considered 
theory \cite{CG}. But, in any case, they transform 
free states into free states \cite{BG}. For entanglement, 
the free states are the so-called separable states 
\cite{W}, and the free operations are those obtained 
from local operations and classical communication. 
Other examples of quantum resources are 
nonuniformity \cite{HHHHOSS,HHO,GMNSH}, 
athermality \cite{JWZGB,A,HO,BHORS}, 
and coherence \cite{CG,BCP,SSDBA,CH}. 
Contrary to entanglement, the definitions of 
these resources, do not rely on a partition of 
the system of interest. Similarly to entanglement 
monotones, a measure of a given resource, cannot 
increase under the corresponding free operations. 

Recently, a monogamy inequality for entanglement 
and local contextuality, has been derived 
\cite{PRA1,aX}. It shows that the entanglement 
between two systems, and the violation, by one 
of them, of a state-dependent noncontextuality 
inequality, constrain each other. In the context of open 
sytems, the development of the entanglement between 
a system and its environment, due to their mutual 
interaction, is commonly seen as playing an essential 
role in the decoherence of the system \cite{PRA2,S}. 
These influences on the quantum properties 
of a system, of its entanglement with another system, 
suggest that there may be a general monogamy 
relation between any local quantum resource and 
entanglement. The interplay between entanglement 
and local coherence, has also been discussed from 
other perspectives \cite{SSDBA,CH}.

In this Letter, we derive a monogamy inequality for 
any local quantum resource and entanglement. 
This inequality involves a convex measure of 
the resource. Though, for resource measures, 
convexity is frequently assumed \cite{BCP}, it is not 
a basic axiom \cite{HHHH,SSDBA}. However, we show 
that, as soon as there exists a proper measure for 
a resource, i.e., that does not increase under free 
operations, there is also a convex one. The derivation 
of the monogamy inequality, essentially relies on 
the relation between entanglement and local entropy 
\cite{aX}. By entropy, we mean a non-negative 
function of quantum states, that depends only on 
the state eigenvalues, is non-decreasing with 
the state mixedness, in the sense of majorization 
\cite{MOA}, and vanishes for pure states 
\cite{K,BZHPL}. The found monogamy inequality is 
discussed for three local resources: entanglement, 
nonuniformity, and coherence.

Two essential requirements for a resource measure, 
are that (i) it is non-negative and vanishes for free 
states, (ii) it is non-increasing under free operations. 
Such a measure consists of a set of functions $R_d$, 
from $d\times d$ density matrices to real numbers 
\cite{GMNSH}, or, more generally, of a set of 
functions $R_{\boldsymbol d}$, where 
$\boldsymbol d \equiv (d_1,d_2,\ldots)$, 
from density matrices on the Hilbert space 
${\cal H}_{d_1}\otimes {\cal H}_{d_2} 
\otimes \ldots$, to real numbers. Condition (i) simply 
means that $R_{\boldsymbol d} \ge 0$, and 
$R_{\boldsymbol d}(\rho_\mathrm{A})=0$ when 
$\rho_\mathrm{A}$ is free, for any $\boldsymbol d$. 
We denote $\rho_\mathrm{A}$ the state of 
the system of interest, possibly composite, named A, 
since a bipartite system, consisting of A and another 
system, is considered in the following. Note that 
$R_{\boldsymbol d}$ can vanish for some resourceful 
states. The important point is that it is zero for 
all free states \cite{BCP,HHHH}. The monotonicity 
condition (ii) reads, more precisely, as
\begin{equation}
R_{\boldsymbol d'}
\left(\sum_q K^{\phantom \dag}_q 
\rho_\mathrm{A} K^\dag_q \right)
\le R_{\boldsymbol d}(\rho_\mathrm{A}) , 
\label{mio1} 
\end{equation}
where $K_q$ are the Kraus operators of 
the considered free operation, which are such that 
$\sum_q K^\dag_qK^{\phantom \dag}_q$ 
is equal to the identity operator on 
${\cal H}_{\boldsymbol d} \equiv 
{\cal H}_{d_1}\otimes {\cal H}_{d_2} 
\otimes \ldots$. It can also be understood as
\begin{equation}
\sum_q p_q R_{\boldsymbol d_q} 
\left(\rho_\mathrm{A}^{(q)}\right)
\le R_{\boldsymbol d}(\rho_\mathrm{A}), 
\label{mio2}
\end{equation}
where $p_q=\operatorname{tr} 
(K^\dag_qK^{\phantom\dag}_q 
\rho_\mathrm{A})$ is the probability of outcome 
$q$, $\rho_\mathrm{A}^{(q)}=
K^{\phantom\dag}_q \rho_\mathrm{A} 
K^\dag_q/p_q$ is the corresponding state, and 
the sum runs over $q$ such that $p_q>0$ 
\cite{BCP}. In eq.\eqref{mio1}, all the linear 
maps $K_q$ are from ${\cal H}_{\boldsymbol d}$ 
to ${\cal H}_{\boldsymbol d'}$, with the same 
$\boldsymbol d'$, whereas, in eq.\eqref{mio2}, 
the vectors $\boldsymbol d_q$ can be different from 
one another. As shown in the supplemental material 
\cite{SM}, if there are functions $R_{\boldsymbol d}$ 
satisfying the above points (i) and (ii), then there are 
convex ones, $R^{ch}_{\boldsymbol d}$, which obey 
the same conditions. They fulfill (ii) in the same way 
as the set $\{ R_{\boldsymbol d} \}$ does. Besides, 
$\{ R^{ch}_{\boldsymbol d} \}$ always obeys 
eq.\eqref{mio1}, since the functions 
$R^{ch}_{\boldsymbol d}$ are convex \cite{BCP}. 
As $R^{ch}_{\boldsymbol d} \le R_{\boldsymbol d}$, 
see supplemental material, it vanishes whenever 
$R_{\boldsymbol d}$ does, but it can be zero for 
other states. If, for example, the set of free states 
is not convex, there are resourceful states which are 
convex combinations of free states. For these states, 
$R^{ch}_{\boldsymbol d}$ necessarily vanishes, 
but $R_{\boldsymbol d}$ may not.

Let us now consider an arbitrary convex function 
$R_{\boldsymbol d}$, of the density operators on 
${\cal H}_{\boldsymbol d}$, non-negative and 
bounded. From $R_{\boldsymbol d}$, we define, 
for any state $\rho_\mathrm{A}$ of any finite 
system A,
\begin{equation}
G(\rho_\mathrm{A}) \equiv 
\sup_{\{ | i \rangle \}} R_{\boldsymbol d}
\left( \sum_{i=1}^r \lambda_i(\rho_\mathrm{A}) 
| i \rangle \langle i | \right) , \label{K}
\end{equation}
for $r \le n(\boldsymbol d)$, and $0$ otherwise, 
where $n(\boldsymbol d) \equiv d_1 d_2 \ldots$ 
is the dimension of ${\cal H}_{\boldsymbol d}$, 
$r$ is the rank of $\rho_\mathrm{A}$, and 
$\lambda_i(M)$ denotes the eigenvalues of 
the Hermitian operator $M$, with 
$\lambda_i(M) \ge \lambda_{i+1}(M)$.
The supremum is taken over the bases 
$\{ | i \rangle \}$ of ${\cal H}_{\boldsymbol d}$. 
The function \eqref{K} depends on 
$\rho_\mathrm{A}$ only via the nonvanishing 
eigenvalues $\lambda_i(\rho_\mathrm{A})$. 
Contrary to $R_{\boldsymbol d}$, it is defined 
for any state, and does not depend on 
the corresponding Hilbert space. To make 
this distinction clear, we do not denote 
its dependence on $\boldsymbol d$, which simply 
comes from the definition \eqref{K}. If the Hilbert 
space of A is ${\cal H}_{\boldsymbol d}$, 
eq.\eqref{K} reduces to $G(\rho_\mathrm{A}) 
= \sup_U R_{\boldsymbol d}
(U \rho_\mathrm{A} U^\dag)$, where 
the supremum is taken over the unitary 
operators $U$ of ${\cal H}_{\boldsymbol d}$, 
and hence, 
$R_{\boldsymbol d}( \rho_\mathrm{A} ) \le 
G(\rho_\mathrm{A})$. 

It follows, from the properties of G, shown in 
the supplemental material, that 
$R_{\boldsymbol d}^{sup}\equiv 
G(| i \rangle \langle i |)$, where $| i \rangle$ is 
any pure state, is the supremum of 
$R_{\boldsymbol d}$, and that the function $S$, 
defined by 
\begin{equation}
S(\rho_\mathrm{A}) 
\equiv R_{\boldsymbol d}^{sup}
-G(\rho_\mathrm{A}) ,\label{S}
\end{equation} 
is non-negative. Furthermore, $-S$ is Schur-convex, 
and, by construction, $S$ vanishes 
when $\rho_\mathrm{A}$ is pure. Thus, $S$ is 
an entropy, and can obey, with 
an entanglement monotone $E$,
\begin{equation}
S(\rho_\mathrm{A})= 
\max_{\rho : \operatorname{tr}_\mathrm{B} \rho 
= \rho_\mathrm{A}} E(\rho) ,
\label{SE}
\end{equation}
where the maximum is taken over the states $\rho$ of 
the composite systems, consisting of A, and another 
system, say B, such that 
$\operatorname{tr}_\mathrm{B} \rho 
= \rho_\mathrm{A}$, 
and $\operatorname{tr}_\mathrm{B}$ 
denotes the partial trace over B \cite{aX}. 
The maximum is reached for pure states $\rho$. 
Note that an entanglement monotone does not 
depend explicitly on the Hilbert space dimensions of 
A an B \cite{HHHH,RBCHM}. Since the function 
\eqref{S} is concave, an explicit entanglement 
monotone, that fulfills eq.\eqref{SE}, with $S$, can 
be built. It is the convex roof
\begin{equation}
E^{cr}(\rho)\equiv 
\inf_{\{P_k,| \Psi_k \rangle \} } \sum_k P_k S
\left(\operatorname{tr}_\mathrm{B} 
| \Psi_k \rangle \langle \Psi_k | \right) , \label{E}
\end{equation} 
where the infinum is taken over the ensembles 
$\{P_k,| \Psi_k \rangle \}$ such that $\sum_k P_k 
| \Psi_k \rangle \langle \Psi_k |=\rho$ \cite{V,aX}.
It is clearly lower than $S(\rho_\mathrm{A})$, 
as $S$ is concave. Expression \eqref{E} with $S$ 
replaced by the von Neumann entropy, is 
the definition of the entanglement of formation 
\cite{HHHH}.

Equation \eqref{SE} gives the monogamy 
inequality
\begin{equation}
R_{\boldsymbol d}(\rho_\mathrm{A})
+E(\rho) \le R_{\boldsymbol d}^{sup} ,
\label{mi}
\end{equation} 
when the Hilbert space of system A is 
${\cal H}_{\boldsymbol d}$. The entanglement 
monotone \eqref{E} satisfies this inequality, but 
it may not be the only one. As soon as 
a monotone $E$ obeys eq.\eqref{SE} with $S$,
 it fulfills eq.\eqref{mi}. Moreover, for such 
an entanglement measure, there are, 
for any local eigenspectrum 
$\{ \lambda_i(\rho_\mathrm{A}) 
\}_{i=1}^{n(\boldsymbol d)}$, 
global states $\rho$ such that the left side 
of eq.\eqref{mi}, is as close as we wish to 
$R_{\boldsymbol d}^{sup}$, 
as shown in the supplemental material. As mentioned 
above, for any quantum resource, there is a convex 
measure of it, and hence, a monogamy inequality 
\eqref{mi} for the entanglement between A and B, 
and this resource for A. For the set of states $\rho$ 
such that the two sides of eq.\eqref{mi} are equal, 
or infinitely close to each other, an increase of 
the entanglement $E(\rho)$, means a reduction of 
the local resource
$R_{\boldsymbol d}(\rho_\mathrm{A})$, of 
the same amount, and reciprocally. In general, 
entanglement and local resource limit each other. 
We remark that the monogamy inequality for 
entanglement and local contextuality, derived 
in Ref.\cite{aX}, is a particular case of eq.\eqref{mi}, 
with $R_{\boldsymbol d}$ replaced by the convex 
function $C_d$ \cite{PRA1}.

The entanglement between A and B, is not changed 
by a unitary transformation $U$ performed on A, 
and hence, $\rho_\mathrm{A}$ can be replaced 
by $U\rho_\mathrm{A}U^\dag$, in eq.\eqref{mi}.
Thus, $R_{\boldsymbol d}^{sup}-E(\rho)$ not only 
upperbounds $R_{\boldsymbol d}(\rho_\mathrm{A})$, 
but also all the values of $R_{\boldsymbol d}$ that can 
be obtained by performing local unitary transformations 
on A. This bound can be reached, in this way, 
when $\rho$ is pure. For the entanglement between 
A and B, inequality \eqref{mi} gives an upper bound, 
$R_{\boldsymbol d}^{sup}
-\sup_U R_{\boldsymbol d}(U \rho_\mathrm{A} U^\dag)$, 
where the supremum is taken over the unitary 
operators $U$ of A, that depends only on 
the eigenvalues of the local state $ \rho_\mathrm{A}$. 
We remark that the supremum of $R_{\boldsymbol d}$, 
was obtained above as 
$R_{\boldsymbol d}^{sup}=\sup_{| i \rangle} 
R_{\boldsymbol d}( | i \rangle \langle i |)$, where 
the supremum is taken over the pure states 
$| i \rangle$ of ${\cal H}_{\boldsymbol d}$. 
The convexity of $R_{\boldsymbol d}$, implies thus 
that it is equal, or infinitely close to its supremum, 
for some pure states. However, the above results do 
not impose that there exist such states independent 
of the resource measure. As is well known, this is 
the case for entanglement and coherence 
\cite{HHHH,BCP}.

An interesting particular case is when the system A 
is made up of two subsystems, $\mathrm{A}_1$ 
and $\mathrm{A}_2$, and the considered 
resource for A, is the entanglement between 
$\mathrm{A}_1$ and $\mathrm{A}_2$. Then, 
inequality \eqref{mi} can be rewritten, in a more familiar 
form, as
\begin{equation}
\tilde E (\mathrm{A}_1:\mathrm{A}_2)
+E(\mathrm{A}_1\mathrm{A}_2:B) 
\le \tilde E_{max}
\label{mei}
\end{equation} 
where the entanglement 
$\tilde E(\mathrm{A}_1:\mathrm{A}_2)$ between 
$\mathrm{A}_1$ and $\mathrm{A}_2$, and 
the entanglement $E(\mathrm{A}_1\mathrm{A}_2:B)$ 
between A and B, are evaluated for the common 
state $\rho$ of $\mathrm{A}_1$, $\mathrm{A}_2$, 
and B. The right side, $\tilde E_{max}$, is 
the maximum value of 
$\tilde E(\mathrm{A}_1:\mathrm{A}_2)$, 
reached when $\rho_\mathrm{A}
=\operatorname{tr}_\mathrm{B} \rho$ 
is a maximally entangled state of 
$\mathrm{A}_1$ and $\mathrm{A}_2$. 
Note that $E(\mathrm{A}_1\mathrm{A}_2:B)$ 
is also bounded by $\tilde E_{max}$. It attains this value 
for pure states $\rho=|\psi \rangle \langle \psi |$, 
where $|\psi \rangle$ has Schmidt coefficients 
such that $\rho_\mathrm{A}$ is absolutely 
separable \cite{KZ,VADM}. The entanglement monogamy 
described by inequality \eqref{mei}, is different from that 
usually discussed \cite{CKW,KW,OV,BXW,HHHH}. 
Equation \eqref{mei} shows that the entanglement 
between two parts of a system, and the entanglement 
of this system with another one, limit each other. 
In the extreme case, usually used to illustrate 
entanglement monogamy, of two maximally entangled 
systems $\mathrm{A}_1$ and $\mathrm{A}_2$, 
it gives $E(\mathrm{A}_1\mathrm{A}_2:B)=0$, 
as expected, since system A is in a pure state, and 
hence not correlated to any other one. 

An inequality, similar to eq.\eqref{mei}, but involving 
only one entanglement monotone, the negativity 
$E_{\cal N}$ \cite{HHHH,ZHSL,ViWe}, can be obtained. 
Any measure of the form \eqref{E}, is larger 
than $g[E_{\cal N}(\rho)]$, where $g$ is 
a non-decreasing function, given 
by $g(x)=co(h)(2x+1)$, with $co(h)$ the convex 
hull of $h$, defined, on $[1,d^*]$, by 
$h(y)\equiv \inf_{\{p_i \} \in {\cal F}(y)} 
S( \sum_{i=1}^{d^*} p_i 
| i \rangle \langle i | )$. 
In this expression, $d^*$ is the smallest of the Hilbert 
space dimensions of A and B,
$\{ | i \rangle \}_{i=1}^{d^*}$ is an arbitrary basis, 
and ${\cal F}(y)$ is the set of $\{p_i \}_{i=1}^{d^*}$ 
such that $p_i \ge 0$, $\sum_{i=1}^{d^*} p_i=1$, and 
$\sum_{i=1}^{d^*} \sqrt{p_i}=\sqrt{y}$ \cite{aX,CAF}. 
Thus, for $\tilde E=E_{\cal N}$, eq.\eqref{mei} leads to
$E_{\cal N} (\mathrm{A}_1:\mathrm{A}_2)+g
\left[ E_{\cal N}(\mathrm{A}_1\mathrm{A}_2:B) \right]
\le E_{{\cal N},max}$, 
where $E_{{\cal N},max}$ is the maximum value of 
$E_{\cal N} (\mathrm{A}_1:\mathrm{A}_2)$. 
The second term on the left side can also reach 
this value. This can be seen as follows. For a maximally 
entangled state of A and B, 
$E_{\cal N}(\mathrm{A}:\mathrm{B})=(d^*-1)/2$  
\cite{ViWe}. Since $h$, and its convex hull, are defined 
on the finite interval $[1,d^*]$, $g(d^*/2-1/2)=h(d^*)$. 
The only element of ${\cal F}(d^*)$ is $p_i=1/d^*$. 
Thus, if the Hilbert space dimension of A is not larger 
than that of B, $h(d^*)$ is determined by the maximally 
mixed state of A, which is invariant under unitary 
transformations, and not entangled, and hence, using 
eq.\eqref{S}, $h(d^*)=E_{{\cal N},max}$. The function 
$g$ defined above, depends on the systems considered. 
For a system A consisting of two two-level systems, it can 
be evaluated explicitly, see supplemental material. 

Inequality \eqref{mei} can be generalised to more 
than three systems. The usual inequality for 
entanglement monogamy, for $N$ systems 
$\mathrm{A}_k$, reads
\begin{equation}
\sum_{k \ge 2} \tilde E (\mathrm{A}_1:\mathrm{A}_k) 
\le \tilde 
E(\mathrm{A}_1:\mathrm{A}_2\ldots \mathrm{A}_N) . 
\label{uem}
\end{equation} 
It has been derived for two-level systems and different 
entanglement monotones \cite{CKW,OV,BXW}, and 
for systems of any sizes and squashed 
entanglement \cite{KW}. For a system A consisting 
of $N$ subsystems, inequalities \eqref{mei} 
and \eqref{uem} lead directly to
\begin{equation}
\sum_{k \ge 2} \tilde E (\mathrm{A}_1:\mathrm{A}_k) 
+ E(\mathrm{A}_1\ldots \mathrm{A}_N:B)
\le \tilde E_{max} , \nonumber
\end{equation} 
which relates the entanglement of one subsystem 
of A, with each of the others, and the entanglement 
between A and B. Other inequalities, for 
the entanglement between parts of a system, and 
its entanglement with another system, can be derived 
from eq.\eqref{mei} and eq.\eqref{uem}. For 
example, for three subsystems, they give
\begin{equation}
\frac{2}{3}\sum_{1 \le k < l \le 3} 
\tilde E (\mathrm{A}_k:\mathrm{A}_l) 
+ E(\mathrm{A}_1 \mathrm{A}_2\mathrm{A}_3:B)
\le \tilde E_{max} . \nonumber
\end{equation} 
To compare inequalities \eqref{mei} and 
\eqref{uem}, we consider three identical systems 
$\mathrm{A}_k$ in a permutation-symmetric state. 
For such a state, the entanglement between 
any two systems, is equal to that between 
$\mathrm{A}_1$ and $\mathrm{A}_2$, and 
the entanglement between a system and the two 
other ones, is equal to that between $\mathrm{A}_1$ 
and the composite system 
$\mathrm{A}_2\mathrm{A}_3$, consisting 
of $\mathrm{A}_2$ and $\mathrm{A}_3$. 
Inequalities \eqref{mei} and \eqref{uem} yield, 
respectively, 
$\tilde E (\mathrm{A}_1:\mathrm{A}_2)
\le \tilde E_{max}
-E(\mathrm{A}_1:\mathrm{A}_2\mathrm{A}_3)$, 
and
$\tilde E (\mathrm{A}_1:\mathrm{A}_2) \le
\tilde E(\mathrm{A}_1:\mathrm{A}_2\mathrm{A}_3)/2$. 
The above first inequality shows that the two-system 
entanglement, and the entanglement of the bipartitions 
of the global system, constrain each other, but the second 
one does not. Assume, for instance, the state of the global 
system is pure. In this case, 
$\tilde E(\mathrm{A}_1:\mathrm{A}_2\mathrm{A}_3)$, 
and $E(\mathrm{A}_1:\mathrm{A}_2\mathrm{A}_3)$, 
are given by the corresponding entropies, evaluated 
for $\rho_{\mathrm{A}_1}$. As this state depends only 
on one probability, 
$E(\mathrm{A}_1:\mathrm{A}_2\mathrm{A}_3)$, 
for $\tilde E=E_{\cal N}$, can be written in terms 
of $E_{\cal N}(\mathrm{A}_1:
\mathrm{A}_2\mathrm{A}_3)$, 
using the results of Ref.\cite{VADM}. Moreover, 
for three two-level systems in a pure state, 
inequality \eqref{uem} holds for $E^2_{\cal N}$ 
\cite{OF}. Finally, we find
\begin{equation}
E_1\le  \min \Big\{ \sqrt{2}E_2 , \sqrt{1-2E_2^2}
+\sqrt{1/4-E_2^2}-1/2\Big\}/2 , \nonumber 
\end{equation}
where 
$E_1=E_{\cal N} (\mathrm{A}_1:\mathrm{A}_2)$, 
and $E_2=E_{\cal N}(\mathrm{A}_1:
\mathrm{A}_2\mathrm{A}_3)$. 
The first term on the right side, comes from 
eq.\eqref{uem}, and increases with $E_2$, whereas 
the second one comes from eq.\eqref{mei}, and 
decreases with $E_2$. They are equal for 
$E_2 \simeq 0.415$. Thus, for $E_2 \ge 0.416$, 
the relation between $E_1$ and $E_2$, is better 
described by inequality \eqref{mei}. 
With the entanglement of formation in place 
of $E_{\cal N}$, the situation is similar, but 
$E(\mathrm{A}_1:\mathrm{A}_2\mathrm{A}_3)$ 
cannot be expressed explicitly. 

As a second example of quantum resource, 
we consider nonuniformity. From the von Neumann 
entropy $S_{vN}$, a measure $R_d$ of this resource, 
can be defined by $R_d(\rho_\mathrm{A}) = \log d 
-S_{vN}(\rho_\mathrm{A})$, where $d$ is the Hilbert 
space dimension of A \cite{GMNSH}. 
The same expression with a R\'enyi 
entropy $S_R$, of positive order, in place of $S_{vN}$, 
gives also a nonuniformity monotone \cite{GMNSH}. 
Such measures depend on $\rho_\mathrm{A} $ only 
via the eigenvalues $\lambda_i (\rho_\mathrm{A})$. 
Thus, eq.\eqref{K} and eq.\eqref{S} yield 
$S(\rho_\mathrm{A})=S_{vN} (\rho_\mathrm{A})$ 
if the rank of $\rho_\mathrm{A}$ is not greater 
than $d$, and $S(\rho_\mathrm{A})=\log d$ 
otherwise, and similarly for $S_R$. However, here, 
entanglement monotones that fulfill eq.\eqref{SE} 
with $S_{vN}$ or $S_R$, instead of $S$, are more 
useful. Such an entanglement measure $E$ satisfies 
the monogamy inequality \eqref{mi} with $R_d$, 
for any value of $d$. This directly follows 
from $E(\rho)\le S_{vN/R}(\rho_\mathrm{A})$ 
and $R_d^{sup}=\log d$. Well-known entanglement 
monotones obey eq.\eqref{SE} with $S_{vN}$, namely, 
distillable entanglement, entanglement cost, 
entanglement of formation, and relative entropy 
of entanglement \cite{HHHH,BBPS,VP}. Note that 
$-S_R$ is not necessarily convex, depending on its order, 
but is always Schur-convex \cite{BZHPL}, and can thus 
satisfy eq.\eqref{SE}. For a Tsallis entropy $S_T$, of 
positive order $q$, the procedure followed in 
ref.\cite{GMNSH}, to construct nonuniformity measures, 
leads to $R_d(\rho_\mathrm{A}) 
=R_d^{sup}-d^{q-1}S_T(\rho_\mathrm{A})$, 
where $R_d^{sup}=(d^{q-1}-1)/(q-1)$. 
An entanglement monotone $E$ that fulfills 
eq.\eqref{SE} with $S_T$, obeys, for any $d$, 
$R_d(\rho_\mathrm{A})
+d^{q-1}E(\rho) \le R_d^{sup}$, which is also 
a monogamy inequality for entanglement and 
local nonuniformity.  

We now turn to quantum coherence, for which 
the free states are the incoherent states, that are 
defined with respect to a specific basis 
$\{ | i \rangle \}$ of the considered Hilbert space 
${\cal H}_d$. A particularly interesting coherence 
measure is the relative entropy of coherence, which 
can be cast into the form
$R_d (\rho_\mathrm{A}) = -\sum_{i=1}^d 
p_i \log p_i-S_{vN} (\rho_\mathrm{A})$,
where 
$p_i=\langle i | \rho_\mathrm{A} | i \rangle$ 
\cite{BCP}. It is clearly lower than 
$\log d-S_{vN} (\rho_\mathrm{A})$. 
For a given density matrix 
$\rho_\mathrm{A}$, this value can be reached 
by performing unitary transformations \cite{YDGLS}, 
and hence $R_d^{sup}=\log d$. 
The situation is thus similar to that of 
the nonuniformity measure, based on $S_{vN}$, 
discussed above. So, $R_d$ obeys inequality 
\eqref{mi}, for any value of $d$, with familiar 
entanglement monotones. All the coherence monotones 
built with the help of a contractive distance \cite{BCP}, 
e.g., the relative entropy of coherence, satisfy 
$R_d\left (\{ U| i \rangle \},
U\rho_\mathrm{A}U^\dag \right)
=R_d(\{ | i \rangle \},\rho_\mathrm{A})$, 
for any unitary operator $U$, where we have 
denoted explicitly the dependence on the basis, 
with respect to which the incoherent states are 
defined. For such measures, the function $S$, 
given by eq.\eqref{S} and eq.\eqref{K}, and 
the supremum $R_d^{sup}$, do not depend on 
the basis $\{ | i \rangle \}$. This can be valid also 
for an entanglement monotone 
obeying eq.\eqref{SE} with $S$. An example is 
given by the definition \eqref{E}. In this case, 
the upper bound, $R_d^{sup}-E(\rho)$, 
to the coherence 
$R_d(\{ | i \rangle \},\rho_\mathrm{A})$, 
is independent of $\{ | i \rangle \}$. Moreover, 
for a pure state $\rho$, this bound is reached for 
some bases.

\begin{figure}
\centering 
\includegraphics[width=0.45\textwidth]{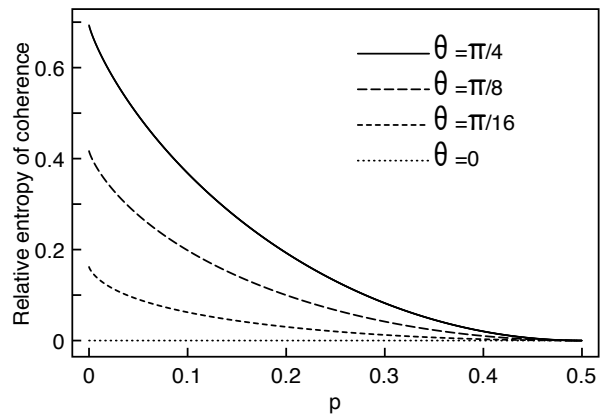}
\caption{\label{fig:Rec} Relative entropy of coherence 
for the long-time local state $p|0\rangle\langle 0| 
+(1-p)|1\rangle\langle 1|$, discussed in the text, 
and the basis $\{ \cos \theta |0\rangle + 
\sin \theta |1\rangle, \sin \theta |0\rangle 
- \cos \theta |1\rangle \}$, as a function of $p$, 
for four values of $\theta$. The coherence for 
$\theta=\pi/4$, is $\log 2-h(p)$, where $h(p)$ is 
the entanglement of formation of the global 
state. It is the bound given by eq.\eqref{mi}.}
\end{figure}

Using the above results, the role played by 
the entanglement of a system with its environment, 
in its decoherence, can be clarified. 
The coherence $R_d$ may vanish at long times, 
for a particular basis $\{ | i \rangle \}$, whereas 
the entanglement with the environment, allows 
nonzero coherence for other bases. Consider, 
for instance, that A is a two-level system, which 
interacts with a large system, B. For a pure 
dephasing Hamiltonian, and if A and B are initially 
in pure states, their common pure state reads
$|\psi\rangle=\sqrt{p} | 0 \rangle 
| \tilde 0 \rangle 
+ \sqrt{1-p} | 1 \rangle | \tilde 1 \rangle$, 
where $p \in [0,1]$, $\{ | 0 \rangle, | 1 \rangle \}$ 
is a basis of ${\cal H}_2$, and the states 
$|\tilde \imath \rangle$ are such that 
$| \tilde 0 \rangle=| \tilde 1 \rangle$ 
at initial time, and $\langle \tilde 0 | \tilde 1 \rangle$ 
goes to zero at long times \cite{PRA2,E}. 
In this long time regime, 
the relative entropy of coherence, for 
the basis $\{ | 0 \rangle, | 1 \rangle \}$, and 
the state $\rho_\mathrm{A}=
\operatorname{tr}_\mathrm{B}
|\psi \rangle \langle \psi |$, vanishes, and 
the entanglement of formation, 
for $\rho=|\psi \rangle \langle \psi |$, reaches 
$h(p) \equiv -p\log p-(1-p)\log (1-p)$. 
Inequality \eqref{mi} implies only that the relative 
entropy of coherence cannot exceed $\log 2-h(p)$, 
which is not zero if $p \ne 1/2$. Since $\rho$ is pure, 
this bound is attained for some bases, e.g., 
$\{ (| 0 \rangle+| 1 \rangle)/\sqrt{2}, 
(| 0 \rangle-| 1 \rangle)/\sqrt{2} \}$. Figure 
\ref{fig:Rec} shows the relative entropy of
 coherence for $\rho_\mathrm{A}$, and different 
bases.

In summary, we have derived a monogamy inequality 
for any local quantum resource and entanglement. 
We have shown that there is always a convex 
measure for a quantum resource, and that, for such 
a measure, there is a concave entropy, which satisfies 
a simple inequality with it. The monogamy inequality 
then ensues from the existence, for any concave 
entropy, of a bipartite entanglement monotone, for 
which the entanglement of the global state is 
necessarily lower than the entropies of the local 
states \cite{aX}. This inequality has been discussed 
for three local resources. It shows that 
the entanglement between parts of a system, and 
the entanglement between this system and another 
one, constrain each other. This entanglement 
monogamy is different from that usually considered 
\cite{CKW,KW,OV,BXW}. As seen, for three 
two-level systems in a pure symmetric state, 
this difference is manifest. For nonuniformity and 
coherence, the inequality can be written in terms 
of known resource measures \cite{GMNSH,BCP}, 
and entanglement monotones, such as 
the entanglement of formation \cite{HHHH}. 
For a large class of coherence monotones, to which 
belong the familiar ones \cite{BCP}, it gives an upper 
bound to the local coherence, which is independent 
of the basis with respect to which the coherence is 
evaluated. This bound is reached for some bases, 
when the global state is pure. Due to its generality, 
we expect the found monogamy inequality to have 
other consequences, for the quantum resources 
considered here, or for other ones.

\setcounter{equation}{0}

\section*{Supplemental Material}

In this supplemental material, we show that, 
for any quantum resource, there is a convex measure 
of it (proposition \ref{conv}), that the function $G$, 
defined by eq.\eqref{Kp}, is Schur-convex and convex 
(proposition \ref{propK}), and that there exist states 
for which the two sides of the monogamy inequality
$$R_{\boldsymbol d}(\rho_\mathrm{A})+E(\rho)
\le R_{\boldsymbol d}^{sup},$$ are equal, or 
infinitely close to each other (proposition \ref{eq}). 
We also evaluate $g(x)=co(h)(2x+1)$, where 
$co(h)$ is the convex hull of $h$, defined, 
on $[1,d^*]$, by
\begin{equation}
h(y)\equiv \tilde E_{max}-
\sup_{\{p_i \} \in {\cal F}(y)} G
\left( \sum_{i=1}^{d^*} p_i 
| i \rangle \langle i | \right) , \label{hp}
\end{equation}
with $\{ | i \rangle \}_{i=1}^{d^*}$ any basis, 
and ${\cal F}(y)$ the set of 
$\{p_i \}_{i=1}^{d^*}$ such that 
$p_i \ge 0$, $\sum_{i=1}^{d^*} p_i=1$, and 
$\sum_{i=1}^{d^*} \sqrt{p_i}=\sqrt{y}$, 
when system A consists of two two-level systems, 
and $\tilde E$ is the negativity. 

A resource measure $\{ R_{\boldsymbol d} \}$ 
obeys
\begin{equation}
R_{\boldsymbol d'}
\left(\sum_q K^{\phantom \dag}_q 
\rho_\mathrm{A} K^\dag_q \right)
\le R_{\boldsymbol d}(\rho_\mathrm{A}) , 
\label{mio1p} 
\end{equation}
where $K_q$ are the Kraus operators of 
the considered free operation, which are such 
that $\sum_q K^\dag_qK^{\phantom \dag}_q$ 
is equal to the identity operator on 
${\cal H}_{\boldsymbol d}$, or
\begin{equation}
\sum_q p_q R_{\boldsymbol d_q} 
\left(\rho_\mathrm{A}^{(q)}\right)
\le R_{\boldsymbol d}(\rho_\mathrm{A}), 
\label{mio2p}
\end{equation}
where $p_q=\operatorname{tr} 
(K^\dag_qK^{\phantom\dag}_q 
\rho_\mathrm{A})$ is the probability of 
outcome $q$, $\rho_\mathrm{A}^{(q)}=
K^{\phantom\dag}_q \rho_\mathrm{A} 
K^\dag_q/p_q$ is the corresponding state, and 
the sum runs over $q$ such that $p_q>0$. 
\begin{prop}\label{conv}
Consider non-negative functions 
$R_{\boldsymbol d}$ that vanish for free states, 
and satisfy eq.\eqref{mio1p} or eq.\eqref{mio2p}, 
with Kraus operators $K_q$.

There are non-negative convex functions 
$R^{ch}_{\boldsymbol d}$, independent 
of $\{ K_q \}$, that vanish for free states, obey 
eq.\eqref{mio1p} with $\{ K_q \}$, if 
$\boldsymbol d_q=\boldsymbol d'$, and fulfill 
eq.\eqref{mio2p} with $\{ K_q \}$, 
if $\{ R_{\boldsymbol d} \}$ does.
\end{prop}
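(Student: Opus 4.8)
The plan is to take $R^{ch}_{\boldsymbol d}$ to be the convex hull (largest convex minorant) of $R_{\boldsymbol d}$ on the convex set of density operators on ${\cal H}_{\boldsymbol d}$, given by the convex-roof-type formula
\[
R^{ch}_{\boldsymbol d}(\rho_\mathrm{A}) \equiv \inf_{\{r_k,\sigma_k\}} \sum_k r_k R_{\boldsymbol d}(\sigma_k),
\]
where the infimum runs over all ensembles $\{r_k,\sigma_k\}$ with $r_k\ge 0$, $\sum_k r_k=1$, and $\sum_k r_k\sigma_k=\rho_\mathrm{A}$. This uses only $R_{\boldsymbol d}$, so $R^{ch}_{\boldsymbol d}$ is manifestly independent of $\{K_q\}$. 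The single-term ensemble gives $R^{ch}_{\boldsymbol d}\le R_{\boldsymbol d}$, and since each $R_{\boldsymbol d}(\sigma_k)\ge 0$ the infimum is non-negative; hence $0\le R^{ch}_{\boldsymbol d}\le R_{\boldsymbol d}$, and in particular $R^{ch}_{\boldsymbol d}$ vanishes on free states, where $R_{\boldsymbol d}$ does. Convexity is the standard property of the convex hull \cite{HUL}: for $\rho_\mathrm{A}=t\sigma+(1-t)\tau$, concatenating near-optimal ensembles of $\sigma$ and $\tau$ produces an ensemble of $\rho_\mathrm{A}$ of cost $\le t R^{ch}_{\boldsymbol d}(\sigma)+(1-t)R^{ch}_{\boldsymbol d}(\tau)+\epsilon$, and letting $\epsilon\to 0$ gives convexity.

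To establish eq.\eqref{mio2p} for $R^{ch}$ when $\{R_{\boldsymbol d}\}$ satisfies it, I would fix a near-optimal ensemble $\rho_\mathrm{A}=\sum_k r_k\sigma_k$ and follow it through the free operation. Writing $p_{q,k}=\operatorname{tr}(K^\dagger_q K_q\sigma_k)$ and $\sigma_k^{(q)}=K_q\sigma_k K^\dagger_q/p_{q,k}$ for $p_{q,k}>0$, one has $p_q=\sum_k r_k p_{q,k}$ and the identity $\rho_\mathrm{A}^{(q)}=\sum_k (r_k p_{q,k}/p_q)\,\sigma_k^{(q)}$, a valid ensemble of $\rho_\mathrm{A}^{(q)}$. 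Convexity of $R^{ch}_{\boldsymbol d_q}$ bounds $p_q R^{ch}_{\boldsymbol d_q}(\rho_\mathrm{A}^{(q)})$ by $\sum_k r_k p_{q,k} R^{ch}_{\boldsymbol d_q}(\sigma_k^{(q)})$; summing over $q$, exchanging the two sums, and using first $R^{ch}_{\boldsymbol d_q}\le R_{\boldsymbol d_q}$ and then eq.\eqref{mio2p} for $\{R_{\boldsymbol d}\}$ applied to each $\sigma_k$ yields $\sum_q p_q R^{ch}_{\boldsymbol d_q}(\rho_\mathrm{A}^{(q)})\le\sum_k r_k R_{\boldsymbol d}(\sigma_k)$. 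As the right side tends to $R^{ch}_{\boldsymbol d}(\rho_\mathrm{A})$ along a near-optimal sequence, eq.\eqref{mio2p} follows.

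For eq.\eqref{mio1p} with $\boldsymbol d_q=\boldsymbol d'$ there are two routes. If $\{R_{\boldsymbol d}\}$ satisfies eq.\eqref{mio1p}, a near-optimal ensemble of $\rho_\mathrm{A}$ maps, under $\Phi(\cdot)=\sum_q K_q(\cdot)K^\dagger_q$, to an ensemble of $\Phi(\rho_\mathrm{A})$; convexity of $R^{ch}_{\boldsymbol d'}$ together with $R^{ch}_{\boldsymbol d'}\le R_{\boldsymbol d'}$ and eq.\eqref{mio1p} for the original bounds $R^{ch}_{\boldsymbol d'}(\Phi(\rho_\mathrm{A}))$ by the ensemble cost, and passing to the infimum gives the claim. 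If instead $\{R_{\boldsymbol d}\}$ satisfies only eq.\eqref{mio2p}, the previous paragraph gives eq.\eqref{mio2p} for $R^{ch}$, and when all $\boldsymbol d_q=\boldsymbol d'$ eq.\eqref{mio1p} is then immediate: $R^{ch}_{\boldsymbol d'}(\Phi(\rho_\mathrm{A}))=R^{ch}_{\boldsymbol d'}(\sum_q p_q\rho_\mathrm{A}^{(q)})\le\sum_q p_q R^{ch}_{\boldsymbol d'}(\rho_\mathrm{A}^{(q)})\le R^{ch}_{\boldsymbol d}(\rho_\mathrm{A})$ by convexity. Thus in either hypothesis of the proposition, $R^{ch}$ obeys eq.\eqref{mio1p} whenever it is well defined.

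The main obstacle is organizational rather than conceptual: one must keep the ensemble decompositions and the measurement-outcome branchings consistently aligned, applying convexity to the correct ensemble at each stage, handling the $\epsilon$'s uniformly before passing to the infimum, and discarding the $p_{q,k}=0$ branches (for which $\sigma_k^{(q)}$ is undefined but the weight vanishes). A minor technical point, settled by standard convex analysis \cite{HUL}, is that the infimum defining $R^{ch}_{\boldsymbol d}$ may be restricted to ensembles of bounded cardinality by a Carath\'eodory argument, so the relevant quantities are well behaved; notably, none of the monotonicity steps requires the infimum to be attained.
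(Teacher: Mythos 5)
Your proposal is correct, but it takes a genuinely different route from the paper. The paper defines $R^{ch}_{\boldsymbol d}$ abstractly as the convex hull of $R_{\boldsymbol d}$, characterized as the \emph{largest} convex minorant, and then never touches an ensemble: to get eq.\eqref{mio1p} it shows the auxiliary function $H(\rho_\mathrm{A})=R^{ch}_{\boldsymbol d'}[\Phi(\rho_\mathrm{A})]$ is convex and $\le R_{\boldsymbol d}$, hence $\le R^{ch}_{\boldsymbol d}$ by maximality; to get eq.\eqref{mio2p} it introduces the perspective-type extension $I_{\boldsymbol d}(\omega)=p\,R^{ch}_{\boldsymbol d}(\omega/p)$, proves $J(\rho_\mathrm{A})=\sum_q I_{\boldsymbol d_q}(K_q^{\phantom\dag}\rho_\mathrm{A}K_q^\dag)$ is convex and $\le R_{\boldsymbol d}$, and again invokes maximality. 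You instead use the explicit infimum-over-ensembles representation of the convex hull and push near-optimal decompositions through the channel, with the branching probabilities $p_{q,k}$ doing by hand exactly what the convexity of the paper's perspective function $I_{\boldsymbol d}$ encodes (your inequality $p_q R^{ch}_{\boldsymbol d_q}(\rho_\mathrm{A}^{(q)})\le\sum_k r_k p_{q,k}R^{ch}_{\boldsymbol d_q}(\sigma_k^{(q)})$ is precisely the convexity of $I_{\boldsymbol d_q}$ on the ensemble). The two constructions coincide in finite dimension by standard convex analysis, though your argument does not even need that equivalence, since you verify every required property of your function directly. What each approach buys: the paper's maximality argument is shorter and dispenses entirely with $\epsilon$-bookkeeping and the $p_{q,k}=0$ branch cases you must handle; yours is more constructive and self-contained, gives an explicit convex-roof formula familiar from entanglement theory, and the Carath\'eodory remark correctly ensures the infimum is over well-controlled ensembles. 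Your final observation in the $\boldsymbol d_q=\boldsymbol d'$ case, deriving eq.\eqref{mio1p} from eq.\eqref{mio2p} for $R^{ch}$ via $R^{ch}_{\boldsymbol d'}(\sum_q p_q\rho_\mathrm{A}^{(q)})\le\sum_q p_q R^{ch}_{\boldsymbol d'}(\rho_\mathrm{A}^{(q)})$, is exactly the closing step of the paper's proof.
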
 
\begin{proof}
Since $R_{\boldsymbol d} \ge 0$, it has a convex 
hull, which is the maximum of the convex functions 
not larger than $R_{\boldsymbol d}$ \cite{HULp}, 
and is thus non-negative. We define 
$R^{ch}_{\boldsymbol d}$ as this convex hull. As 
$0 \le R^{ch}_{\boldsymbol d} 
\le R_{\boldsymbol d}$, $R^{ch}_{\boldsymbol d}$ 
vanishes whenever $R_{\boldsymbol d}$ does, 
e.g., for free states. 

Assume $R_{\boldsymbol d}$ and 
$R_{\boldsymbol d'}$ satisfy eq.\eqref{mio1p}, and 
define the function $H$ by $H(\rho_\mathrm{A}) 
\equiv R^{ch}_{\boldsymbol d'}
[\Phi (\rho_\mathrm{A})]$, 
where $\Phi (\rho_\mathrm{A}) \equiv \sum_q 
K^{\phantom\dag}_q \rho_\mathrm{A} K^\dag_q$. 
Due to $R^{ch}_{\boldsymbol d'} \le 
R^{\phantom c}_{\boldsymbol d'}$ 
and eq.\eqref{mio1p}, 
$H$ is not greater than $R_{\boldsymbol d}$. 
For the states $\rho'_\mathrm{A}$, 
$\rho''_\mathrm{A}$, and $\rho_\mathrm{A}
=\tau \rho'_\mathrm{A}
+\bar \tau \rho''_\mathrm{A}$, 
where $\bar \tau=1-\tau$, and $\tau \in [0,1]$, 
one obtains $$H(\rho_\mathrm{A})
=R^{ch}_{\boldsymbol d'}[
\tau \Phi (\rho'_\mathrm{A})
+\bar \tau \Phi (\rho''_\mathrm{A})] 
\le \tau H (\rho'_\mathrm{A})+\bar \tau 
H (\rho''_\mathrm{A}) ,$$ using the linearity 
of $\Phi$, and the convexity 
of $R^{ch}_{\boldsymbol d'}$. Since $H$ is 
convex and not larger than $R_{\boldsymbol d}$, 
$H \le R^{ch}_{\boldsymbol d}$, i.e., 
$R^{ch}_{\boldsymbol d}$ and 
$R^{ch}_{\boldsymbol d'}$ obey eq.\eqref{mio1p}.

Assume now that $\{ R_{\boldsymbol d} \}$ fulfills 
eq.\eqref{mio2p}, and define the functions 
$I_{\boldsymbol d}$ by $I_{\boldsymbol d}(\omega) 
\equiv p R^{ch}_{\boldsymbol d}(\omega /p)$, 
where $\omega$ is any positive Hermitian operator 
on ${\cal H}_{\boldsymbol d}$, of trace 
$p=\operatorname{tr} \omega >0$, 
and $I_{d}(0) \equiv 0$. For $\omega'$, $\omega''$, 
and $\omega=\tau \omega'+\bar \tau \omega''$, 
where $\bar \tau=1-\tau$, and $\tau \in [0,1]$, 
the convexity of $R^{ch}_{\boldsymbol d}$ leads 
to $$I_{\boldsymbol d}(\omega) 
= p R_{\boldsymbol d}^{ch} 
\left( \tau \frac{p'}{p} \frac{\omega'}{p'} 
+ \bar \tau \frac{p''}{p} \frac{\omega''}{p''} \right)
\le \tau I_{\boldsymbol d}(\omega')
+\bar \tau I_{\boldsymbol d}(\omega'') ,$$
where $p=\operatorname{tr} \omega$, and $p'$ 
and $p''$ are given by similar expressions. 
Consequently, the function $J$, defined by 
$J(\rho_\mathrm{A}) \equiv 
\sum_q I_{\boldsymbol d_q}(K^{\phantom\dag}_q 
\rho_\mathrm{A} K^\dag_q)$, is convex. Moreover, 
due to $R^{ch}_{\boldsymbol d_q} 
\le R_{\boldsymbol d_q}$ and eq.\eqref{mio2p}, 
$J$ is not greater than $R_{\boldsymbol d}$. 
Thus, $J \le R^{ch}_{\boldsymbol d}$, i.e., 
$\{ R^{ch}_{\boldsymbol d} \}$ obeys 
eq.\eqref{mio2p}. 
For $\boldsymbol d_q=\boldsymbol d'$, 
since $R^{ch}_{\boldsymbol d'}$ is convex, 
$R^{ch}_{\boldsymbol d'}
[\Phi (\rho_\mathrm{A})]
\le J(\rho_\mathrm{A})$, and hence 
$R^{ch}_{\boldsymbol d}$ and 
$R^{ch}_{\boldsymbol d'}$ fulfill eq.\eqref{mio1p}.
\end{proof} 

From an arbitrary convex function 
$R_{\boldsymbol d}$, of the density operators on 
${\cal H}_{\boldsymbol d}$, non-negative and 
bounded, we define, 
for any state $\rho_\mathrm{A}$ of any finite 
system A,
\begin{equation}
G(\rho_\mathrm{A}) \equiv 
\sup_{\{ | i \rangle \}} R_{\boldsymbol d}
\left( \sum_{i=1}^r \lambda_i(\rho_\mathrm{A}) 
| i \rangle \langle i | \right) , \label{Kp}
\end{equation}
for $r \le n(\boldsymbol d)$, and $0$ otherwise, 
where $n(\boldsymbol d) \equiv d_1 d_2 \ldots$ 
is the dimension of ${\cal H}_{\boldsymbol d}$, 
$r$ is the rank of $\rho_\mathrm{A}$, and 
$\lambda_i(M)$ denotes the eigenvalues of 
the Hermitian operator $M$, with 
$\lambda_i(M) \ge \lambda_{i+1}(M)$.
The supremum is taken over the bases 
$\{ | i \rangle \}$ of ${\cal H}_{\boldsymbol d}$. 
\begin{prop}\label{propK}
The function \eqref{Kp} is
\begin{enumerate}[label=\roman*),
leftmargin=15pt,itemsep=-2pt,topsep=3pt]
\item Schur-convex, i.e., \label{pointii}
$G(\rho_\mathrm{A}) \le G(\rho'_\mathrm{A})$ 
when $\rho'_\mathrm{A}$ majorizes 
$\rho_\mathrm{A}$, where $\rho_\mathrm{A}$ 
and $\rho'_\mathrm{A}$ are states of any finite 
systems,
\item convex, i.e., \label{pointiii}
$G(\tau \rho_\mathrm{A}+ 
\bar \tau \rho'_\mathrm{A}) 
\le \tau G(\rho_\mathrm{A}) 
+ \bar \tau G(\rho'_\mathrm{A})$, 
where $\rho_\mathrm{A}$ and 
$\rho'_\mathrm{A}$ are states of a same system, 
$\bar \tau=1-\tau$, and $\tau \in [0,1]$.
\end{enumerate}
\end{prop}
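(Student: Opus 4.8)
The plan is to reduce both statements to a single symmetric convex function on the probability simplex, and then feed it into two classical majorization facts. First I would note that $G$ depends on $\rho_\mathrm{A}$ only through its ordered eigenvalue vector, and that, for a state of rank $r\le n(\boldsymbol d)$, definition \eqref{Kp} rewrites as $G(\rho_\mathrm{A})=\sup_U R_{\boldsymbol d}(U D_{\boldsymbol\lambda} U^\dagger)$, where $D_{\boldsymbol\lambda}$ is the diagonal operator on ${\cal H}_{\boldsymbol d}$ carrying the eigenvalues $\lambda_i(\rho_\mathrm{A})$ padded with $n(\boldsymbol d)-r$ zeros, and $U$ ranges over the unitaries of ${\cal H}_{\boldsymbol d}$. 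This lets me introduce $g(\boldsymbol p)\equiv\sup_U R_{\boldsymbol d}(U D_{\boldsymbol p} U^\dagger)$ on the $n(\boldsymbol d)$-simplex, so that $G(\rho_\mathrm{A})=g(\boldsymbol\lambda(\rho_\mathrm{A}))$ whenever $r\le n(\boldsymbol d)$, and $G(\rho_\mathrm{A})=0$ otherwise. The function $g$ is \emph{symmetric}, since permuting the entries of $\boldsymbol p$ conjugates $D_{\boldsymbol p}$ by a permutation unitary absorbed into the supremum. It is also \emph{convex} on the simplex: for each fixed $U$ the map $\boldsymbol p\mapsto R_{\boldsymbol d}(U D_{\boldsymbol p} U^\dagger)$ is the composition of the affine map $\boldsymbol p\mapsto U D_{\boldsymbol p} U^\dagger$ with the convex $R_{\boldsymbol d}$, hence convex, and a pointwise supremum of convex functions is convex.

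For point \ref{pointii}, the main ingredient is the Hardy--Littlewood--P\'olya/Birkhoff theorem \cite{MOA}: if $\rho'_\mathrm{A}$ majorizes $\rho_\mathrm{A}$ then $\boldsymbol\lambda(\rho_\mathrm{A})=\sum_k q_k \Pi_k \boldsymbol\lambda(\rho'_\mathrm{A})$ for permutation matrices $\Pi_k$ and weights $q_k\ge 0$ with $\sum_k q_k=1$. Applying convexity and then symmetry of $g$ gives $g(\boldsymbol\lambda(\rho_\mathrm{A}))\le \sum_k q_k g(\Pi_k\boldsymbol\lambda(\rho'_\mathrm{A}))=g(\boldsymbol\lambda(\rho'_\mathrm{A}))$. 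Two bookkeeping points finish the argument: majorization forces the rank of $\rho'_\mathrm{A}$ to be at most that of $\rho_\mathrm{A}$, so if $\rho_\mathrm{A}$ has rank $\le n(\boldsymbol d)$ both vectors live on the same simplex and the above applies; and if $\rho_\mathrm{A}$ has rank $>n(\boldsymbol d)$ then $G(\rho_\mathrm{A})=0\le G(\rho'_\mathrm{A})$ trivially, as $G\ge 0$.

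For point \ref{pointiii} I would invoke Ky Fan's eigenvalue inequality \cite{F}, which yields the majorization $\boldsymbol\lambda(\tau\rho_\mathrm{A}+\bar\tau\rho'_\mathrm{A})\prec\tau\boldsymbol\lambda(\rho_\mathrm{A})+\bar\tau\boldsymbol\lambda(\rho'_\mathrm{A})$. Chaining the already-established Schur-convexity with the convexity of $g$ then gives $G(\tau\rho_\mathrm{A}+\bar\tau\rho'_\mathrm{A})=g(\boldsymbol\lambda(\tau\rho_\mathrm{A}+\bar\tau\rho'_\mathrm{A}))\le g(\tau\boldsymbol\lambda(\rho_\mathrm{A})+\bar\tau\boldsymbol\lambda(\rho'_\mathrm{A}))\le \tau g(\boldsymbol\lambda(\rho_\mathrm{A}))+\bar\tau g(\boldsymbol\lambda(\rho'_\mathrm{A}))=\tau G(\rho_\mathrm{A})+\bar\tau G(\rho'_\mathrm{A})$. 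Again I must dispose of the rank cutoff: since $\tau,\bar\tau>0$ the support of the mixture contains those of $\rho_\mathrm{A}$ and $\rho'_\mathrm{A}$, so all three states have rank $\le n(\boldsymbol d)$ simultaneously, and the sorted vector $\tau\boldsymbol\lambda(\rho_\mathrm{A})+\bar\tau\boldsymbol\lambda(\rho'_\mathrm{A})$ likewise lies on the simplex; if instead the mixture has rank $>n(\boldsymbol d)$ its $G$ vanishes and the inequality is immediate.

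The individual steps are routine once the reformulation through $g$ is in place, and the two heavy inputs (Birkhoff and Ky Fan) are standard. I expect the only real care to be needed in the rank-cutoff bookkeeping: checking that the ``$0$ otherwise'' branch of \eqref{Kp} is consistent with both majorization and mixing, and that the relevant eigenvalue vectors genuinely lie on a common simplex of dimension $n(\boldsymbol d)$, so that the simplex-level convexity and Schur-convexity of $g$ may legitimately be invoked.
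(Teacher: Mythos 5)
Your proof is correct and takes essentially the same approach as the paper's: both reduce $G$ to the symmetric, convex function $\boldsymbol p \mapsto \sup_{\{|i\rangle\}} R_{\boldsymbol d}\bigl(\sum_i p_i |i\rangle\langle i|\bigr)$ on the $n(\boldsymbol d)$-simplex, deduce Schur-convexity, and prove convexity by chaining the Ky Fan majorization with the Schur-convexity and convexity of that function, with the same trivial-case treatment of ranks exceeding $n(\boldsymbol d)$. The only cosmetic differences are that you unpack the lemma ``symmetric and convex implies Schur-convex'' via Birkhoff/Hardy--Littlewood--P\'olya, which the paper simply cites, and that you settle the rank bookkeeping for the mixture by support inclusion where the paper reads $r''\ge r$ off the Ky Fan majorization.
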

\begin{proof}
Here, we denote $n(\boldsymbol d)$ by $n$. 
For a density operator $\rho_\mathrm{A}$ of rank 
$r \le n$, we rewrite eq.\eqref{Kp} as 
\begin{equation}
G(\rho_\mathrm{A})=f\left[\boldsymbol 
\lambda_n(\rho_\mathrm{A})\right] ,\label{Kf}
\end{equation}
where the $n$-component vector 
$\boldsymbol \lambda_n(\rho_\mathrm{A})$
is made up of the $r$ nonvanishing eigenvalues 
$\lambda_i(\rho_\mathrm{A})$, in decreasing order, 
followed by $n-r$ zeros. The function $f$ of 
the $n$-component probability vectors 
$\boldsymbol p$, i.e., such that $p_i \ge 0$ 
and $\sum_{i=1}^n p_i=1$, is given by
\begin{equation}
f(\boldsymbol p) \equiv  \sup_{\{|i\rangle \}}
R_{\boldsymbol d}\left(
\sum_{i=1}^n p_i | i \rangle \langle i | \right)  , 
\label{f}
\end{equation}
where the supremum is taken over the bases 
$\{|i\rangle \}$ of ${\cal H}_{\boldsymbol d}$. 
It is clear, from its definition, that $f$ is 
a symmetric function of the components $p_i$. 
Consider the probability vectors $\boldsymbol p$, 
$\boldsymbol p'$, and $\boldsymbol p'' \equiv 
\tau \boldsymbol p+\bar \tau \boldsymbol p'$, 
where $\bar \tau=1-\tau$, and $\tau \in [0,1]$. 
The convexity of $R_{\boldsymbol d}$ and 
the definition \eqref{f}, give, for any basis 
$\{ | i \rangle \}$ of ${\cal H}_{\boldsymbol d}$, 
$R_{\boldsymbol d}( 
\sum_{i=1}^n p''_i | i \rangle \langle i |) 
\le  \tau f(\boldsymbol p)+\bar \tau f(\boldsymbol p')$, 
which leads to the convexity of $f$. Being symmetric 
and convex, $f$ is Schur-convex \cite{MOAp}.

\ref{pointii} Consider two density operators 
$\rho_\mathrm{A}$ and $\rho'_\mathrm{A}$, 
of ranks $r$ and $r'$, respectively, such that 
$\rho'_\mathrm{A}$ majorizes $\rho_\mathrm{A}$. 
Thus, $r' \le r$. If $r>n$, the inequality 
$G(\rho_\mathrm{A}) \le G(\rho'_\mathrm{A})$ is 
trivially obeyed. If $r \le n$, $G(\rho_\mathrm{A})$ 
and $G(\rho'_\mathrm{A})$ are both given by 
eq.\eqref{Kf}.  Since $\boldsymbol\lambda_n
(\rho'_\mathrm{A})$ majorizes 
$\boldsymbol\lambda_n(\rho_\mathrm{A})$, 
and $f$ is Schur-convex, 
$G(\rho_\mathrm{A}) \le G(\rho'_\mathrm{A})$.

\ref{pointiii} Consider the states $\rho_\mathrm{A}$ 
and $\rho'_\mathrm{A}$, of a same system, 
of ranks $r$ and $r'$, respectively, 
and $\rho''_\mathrm{A} \equiv 
\tau \rho_\mathrm{A}
+\bar \tau \rho'_\mathrm{A}$, where 
$\bar \tau=1-\tau$, and $\tau \in ]0,1[$. 
We assume, without loss of generality, 
that $r'\le r$. Due to Ky Fan eigenvalue inequality, 
$\tau \boldsymbol \lambda(\rho_\mathrm{A})
+\bar \tau \boldsymbol \lambda(\rho'_\mathrm{A})$ 
majorizes $\boldsymbol \lambda(\rho''_\mathrm{A})$, 
where $\boldsymbol \lambda(M)$ is the vector made 
up of the eigenvalues $\lambda_i(M)$, in decreasing 
order \cite{MOAp,Fp}. The rank $r''$ of 
$\rho''_\mathrm{A}$, is hence not smaller than $r$. 
The convexity inequality for $G$, is obviously satisfied 
with $\rho_\mathrm{A}$ and $\rho'_\mathrm{A}$, 
if $r''>n$. In the case $r'' \le n$, 
$G(\rho_\mathrm{A})$, $G(\rho'_\mathrm{A})$, 
and $G(\rho''_\mathrm{A})$ are given by 
eq.\eqref{Kf}. Moreover, $\tau \boldsymbol p
+\bar \tau \boldsymbol p'$ majorizes 
$\boldsymbol p''$, where $\boldsymbol p
=\boldsymbol\lambda_n(\rho_\mathrm{A})$, 
and $\boldsymbol p'$ and $\boldsymbol p''$ 
are given by similar expressions, which leads to 
$$f(\boldsymbol p'') \le 
f(\tau \boldsymbol p+\bar \tau\boldsymbol p') 
\le \tau f(\boldsymbol p)
+\bar \tau f(\boldsymbol p') , $$ 
since $f$ is  Schur-convex and convex. 
\end{proof}

\begin{prop}\label{eq}
Consider a system A whose Hilbert space 
is ${\cal H}_{\boldsymbol d}$, a $m$-level 
system B, $r$ probabilities $p_i$, such that 
$\sum_{i=1}^r p_i=1$, and $p_i \ge p_{i+1}$, 
with $r \le m,n(\boldsymbol d)$, 
and an entanglement monotone $E$, related to 
$R_{\boldsymbol d}$, by
\begin{equation}
 R_{\boldsymbol d}^{sup}-G(\rho_\mathrm{A})= 
\max_{\rho : \operatorname{tr}_\mathrm{B} \rho 
= \rho_\mathrm{A}} E(\rho) . \nonumber
\end{equation}

For any $\epsilon>0$, there are pure states 
$\rho$ of A and B, such that 
$\lambda_i(\rho_\mathrm{A})=p_i$ for $i \le r$, 
and $0$ otherwise, and $R_{\boldsymbol d}^{sup}
-R_{\boldsymbol d}(\rho_\mathrm{A})
-E(\rho) < \epsilon$.
\end{prop}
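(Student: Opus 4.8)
The plan is to first collapse the target quantity $R_{\boldsymbol d}^{sup}-R_{\boldsymbol d}(\rho_\mathrm{A})-E(\rho)$ using the defining relation, and then to exploit the freedom the proposition leaves in $\rho_\mathrm{A}$, which fixes only its spectrum and not its eigenvectors.

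The first ingredient is the hypothesis $R_{\boldsymbol d}^{sup}-G(\rho_\mathrm{A})=\max_{\rho:\operatorname{tr}_\mathrm{B}\rho=\rho_\mathrm{A}}E(\rho)$, together with the fact, noted after eq.\eqref{SE}, that this maximum is attained for pure states. I would argue that \emph{every} pure state $\rho$ with $\operatorname{tr}_\mathrm{B}\rho=\rho_\mathrm{A}$ already satisfies $E(\rho)=R_{\boldsymbol d}^{sup}-G(\rho_\mathrm{A})$: for a pure bipartite state an entanglement monotone is fixed by the Schmidt coefficients alone, since it is invariant under local unitaries and any two purifications of $\rho_\mathrm{A}$ are related by a local unitary on the purifying system; those coefficients are exactly the eigenvalues $\lambda_i(\rho_\mathrm{A})$. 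Hence $E(\rho)$ takes one common value over all such pure $\rho$, and since the maximum is attained by one of them, that value is the maximum $R_{\boldsymbol d}^{sup}-G(\rho_\mathrm{A})$. For any such pure $\rho$ the target then reduces to $R_{\boldsymbol d}^{sup}-R_{\boldsymbol d}(\rho_\mathrm{A})-E(\rho)=G(\rho_\mathrm{A})-R_{\boldsymbol d}(\rho_\mathrm{A})$, so it remains only to orient $\rho_\mathrm{A}$ so that $R_{\boldsymbol d}(\rho_\mathrm{A})$ nearly reaches $G(\rho_\mathrm{A})$.

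Next I would use the unitary freedom. Since the Hilbert space of A is ${\cal H}_{\boldsymbol d}$ and $r\le n(\boldsymbol d)$, the excerpt gives $G(\rho_\mathrm{A})=\sup_U R_{\boldsymbol d}(U\rho_\mathrm{A}U^\dag)$. I would start from any reference state $\rho_\mathrm{A}^{(0)}=\sum_{i=1}^r p_i|e_i\rangle\langle e_i|$ with the prescribed spectrum; by the definition of the supremum there is a unitary $U$ with $R_{\boldsymbol d}(U\rho_\mathrm{A}^{(0)}U^\dag)>G(\rho_\mathrm{A}^{(0)})-\epsilon$. Setting $\rho_\mathrm{A}=U\rho_\mathrm{A}^{(0)}U^\dag$, which again has eigenvalues $p_i$ for $i\le r$ and $0$ otherwise, and using that $G$ depends on $\rho_\mathrm{A}$ only through its eigenvalues, so that $G(\rho_\mathrm{A})=G(\rho_\mathrm{A}^{(0)})$, yields $G(\rho_\mathrm{A})-R_{\boldsymbol d}(\rho_\mathrm{A})<\epsilon$.

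Finally I would purify and assemble. Because $m\ge r=\operatorname{rank}\rho_\mathrm{A}$, the vector $|\psi\rangle=\sum_{i=1}^r\sqrt{p_i}|f_i\rangle|g_i\rangle$, with $|f_i\rangle$ the eigenvectors of $\rho_\mathrm{A}$ and $\{|g_i\rangle\}$ orthonormal in ${\cal H}_m$, defines a pure state of A and B with $\operatorname{tr}_\mathrm{B}|\psi\rangle\langle\psi|=\rho_\mathrm{A}$ and the required eigenvalues. Taking $\rho=|\psi\rangle\langle\psi|$ and combining the two previous steps gives $R_{\boldsymbol d}^{sup}-R_{\boldsymbol d}(\rho_\mathrm{A})-E(\rho)=G(\rho_\mathrm{A})-R_{\boldsymbol d}(\rho_\mathrm{A})<\epsilon$. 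I expect the only delicate point to be the first step: one must be careful to justify that this particular purification saturates the relation, i.e.\ that the maximum over mixed $\rho$ is already reached on pure states and that, for pure states, $E$ is completely determined by the spectrum of $\rho_\mathrm{A}$; everything else is bookkeeping with the unitary invariance of $G$ and the definition of the supremum.
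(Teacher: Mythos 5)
Your proof is correct and takes essentially the same route as the paper's: there, one fixes a pure state $\tilde\rho$ with Schmidt coefficients $\sqrt{p_i}$, uses that pure states saturate the defining relation, so that $E(\tilde\rho)=R^{sup}_{\boldsymbol d}-\sup_U R_{\boldsymbol d}\left(\operatorname{tr}_\mathrm{B}\, U\tilde\rho\, U^\dag\right)$, and then lets the local-unitary orbit of $\tilde\rho$ on A bring $R_{\boldsymbol d}(\rho_\mathrm{A})$ within $\epsilon$ of $G(\rho_\mathrm{A})$. Your version merely transposes the steps (rotate the marginal first, then purify, with $m\ge r$ guaranteeing the purification and the unitary equivalence of purifications), which is the same argument.
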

\begin{proof}
Consider any pure state 
$\tilde \rho=|\psi \rangle \langle \psi |$ of A and B,
where $|\psi \rangle$ has Schmidt coefficients 
$\sqrt{p_i}$, and define $\cal F$ the set of all pure 
states $\rho=U\tilde \rho U^\dag$, where $U$ is 
any unitary operator of A. For any 
$\rho \in \cal F$, $E(\rho)=E(\tilde \rho)$, 
and $\lambda_i(\rho_\mathrm{A})= p_i$ for $i \le r$, 
and $0$ otherwise. Since $\tilde \rho$ is pure, 
$E(\tilde \rho)=R^{sup}_{\boldsymbol d}
-\sup_U R_{\boldsymbol d}
(\operatorname{tr}_B U\tilde \rho U^\dag)$, 
where the supremum is taken over the unitary 
operators $U$ of A. Thus, 
$\sup_{\rho \in \cal F} 
[ E(\rho)+ R_{\boldsymbol d}(\rho_\mathrm{A})]
=R_{\boldsymbol d}^{sup}$, which finishes 
the proof.
\end{proof}

When system A consists of two two-level systems, 
and $\tilde E$ is the negativity, 
$G(\rho_\mathrm{A})
=\max\{0,H(\rho_\mathrm{A})\}/2$, where
\begin{equation}
H(\rho_\mathrm{A})= \sqrt{(p_1 - p_3)^2
+(p_2 - p_4)^2}-p_2 - p_4  , \nonumber
\end{equation}
with $p_i=\lambda_i(\rho_\mathrm{A})$
\cite{VADMp}, and hence $\tilde E_{max}=1/2$. 
For $d^*=2$, in eq.\eqref{hp}, the set ${\cal F}(y)$ 
has only one element $\{p_1,p_2\}$. 
Consequently, $g$ is given by 
$$g(x)=\Big(3/2-\sqrt{1-2x^2}-\sqrt{1/4-x^2}\Big)/2 ,$$
which increases from $0$ to $(3-\sqrt{2})/4$, as $x$ 
varies from $0$ to $1/2$. 
For $d^*=3$, to evaluate $h(y)$, we first determine 
the maximum of $H(\rho_\mathrm{A})$, 
for $\{p_1,p_2,p_3 \} \in {\cal F}(y)$. 
It is reached for $p_2=p_3$, and decreases from $1$ 
to $0$, as $y$ varies from $1$ to $3$. It leads to 
$$g(x)=1/2-k(x)\Big(\sqrt{1+[9/k(x)-3]^2}-1\Big)/18,$$ 
where $k(x)=(\sqrt{2x+1}-\sqrt{1-x})^2$. This function 
increases from $0$ to $1/2$, as $x$ varies from $0$ to $1$. 
For $d^*=4$, the maximum of $H(\rho_\mathrm{A})$, 
is reached for $p_2=p_3=p_4$, and is equal to 
$f(y)\equiv(\sqrt{y}+\sqrt{12-3y})^2/8-1$. For 
$y\le 2+\sqrt{3}$, it is positive, and hence 
$h(y)=1/2-f(y)/2$. For $y\ge 2+\sqrt{3}$, it is 
negative, and hence $h(y)=1/2$. The convex hull 
$co(h)$ is equal to $h$, 
for $y\le y_0$, and to $a(y-4)+1/2$, for $y\ge y_0$, 
where $y_0$ and $a$ are determined by 
$h'(y_0)=a$, and $h(y_0)=a(y_0-4)+1/2$, with 
$h'$ the derivative of $h$. These conditions give 
$y_0=3$ and $a=1/4$, and hence
\begin{eqnarray}
g(x) &=& 
1-\left(\sqrt{1+2x}+\sqrt{9-6x} \right)^2/16
\;\mbox{for}\; x\in[0,1] , \nonumber \\
&=&  x/2 -1/4\;\mbox{for}\; x\in[1,3/2] ,
\nonumber
\end{eqnarray}
which increases from $0$ to $1/2$, as $x$ 
varies from $0$ to $3/2$.


\begin{thebibliography}{99}

\bibitem{CKW} V. Coffman, J. Kundu, and 
W.K. Wootters, Distributed entanglement, 
Phys. Rev. A {\bf 61}, 052306 (2000).

\bibitem{KW} M. Koashi, and A.Winter, 
Monogamy of quantum entanglement and 
other correlations, Phys. Rev. A {\bf 69}, 
022309 (2004).

\bibitem{OV} T.J. Osborne, and F. Verstraete, 
General monogamy inequality for bipartite qubit 
entanglement, Phys. Rev. Lett. {\bf 96}, 
220503 (2006).

\bibitem{BXW} Y.-K. Bai, Y.-F. Xu, and Z.D. Wang, 
General Monogamy Relation for the Entanglement 
of Formation in Multi-qubit Systems, 
Phys. Rev. Lett. {\bf 113}, 100503 (2014).

\bibitem{V} G. Vidal, Entanglement monotones, 
J. Mod. Opt. {\bf 47}, 355 (2000).

\bibitem{PV} M. B. Plenio, and S. Virmani, 
An introduction to entanglement measures, 
Quant. Inf. Comput. {\bf 7}, 1 (2007).

\bibitem{HHHH} R. Horodecki, P. Horodecki, 
M. Horodecki, and K. Horodecki, Quantum 
entanglement, Rev. Mod. Phys. {\bf 81}, 
865 (2009).

\bibitem{CG} E. Chitambar, and G. Gour, 
Critical Examination of Incoherent Operations 
and a Physically Consistent Resource Theory 
of Quantum Coherence, Phys. Rev. Lett. 
{\bf 117}, 030401 (2016).

\bibitem{BG} F. G.S.L Brand\~ ao, and G. Gour, 
Reversible Framework for Quantum Resource 
Theories, Phys. Rev. Lett. {\bf 115}, 070503 (2015).

\bibitem{W} R.F. Werner, Quantum states with 
Einstein-Podolsky-Rosen correlations admitting 
a hidden-variable model, Phys. Rev. A {\bf 40}, 
4277 (1989).

\bibitem{HHHHOSS} M. Horodecki, K. Horodecki, 
P. Horodecki, R. Horodecki, J. Oppenheim, 
A. Sen(De), and U. Sen, Local Information as a 
Resource in Distributed Quantum Systems, 
Phys. Rev. Lett. {\bf 90}, 100402 (2003). 

\bibitem{HHO} M. Horodecki, P. Horodecki, and 
J. Oppenheim, Reversible transformations from 
pure to mixed states and the unique measure of 
information, Phys. Rev. A {\bf 67}, 062104 (2003).

\bibitem{GMNSH} G. Gour, M. P. M\" uller, 
V. Narasimhachar, R. W. Spekkens, and 
N. Y. Halpern, The resource theory of 
informational nonequilibrium in thermodynamics, 
Phys. Rep. {\bf 583}, 1 (2015).

\bibitem{JWZGB} D. Janzing, P. Wocjan, 
R. Zeier, R. Geiss, and T. Beth, 
Thermodynamic cost of reliability and low 
temperatures: Tightening Landauer's principle 
and the second law, Int. J. Theor. Phys. {\bf 39}, 
2717 (2000).

\bibitem{A} J. Aberg, Truly work-like extraction 
via a single-shot analysis, 
Nat. Commun. {\bf 4}, 1925 (2013).

\bibitem{HO} M. Horodecki and J. Oppenheim, 
Fundamental limitations for quantum and nanoscale 
thermodynamics, Nat. Commun. {\bf 4}, 2059 (2013).

\bibitem{BHORS} F. G. S. L. Brand\~ ao, 
M. Horodecki, J. Oppenheim, J. M. Renes, and 
R. W. Spekkens, Resource Theory of Quantum 
States Out of Thermal Equilibrium, 
Phys. Rev. Lett. {\bf 111}, 250404 (2013).

\bibitem{BCP} T. Baumgratz, M. Cramer, 
and M.B. Plenio, Quantifying Coherence, 
Phys. Rev. Lett. {\bf 113}, 140401 (2014).

\bibitem{SSDBA} A. Streltsov, U. Singh, 
H.S. Dhar, M.N. Bera, and G. Adesso, 
Measuring Quantum Coherence with 
Entanglement, Phys. Rev. Lett. {\bf 115}, 
020403 (2015).

\bibitem{CH} E. Chitambar, and M.-H. Hsieh, 
Relating the Resource Theories of Entanglement 
and Quantum Coherence, Phys. Rev. Lett. 
{\bf 117}, 020402 (2016).

\bibitem{PRA1} S. Camalet, Simple state 
preparation for contextuality tests with 
few observables, Phys. Rev. A {\bf 94}, 
022106 (2016).

\bibitem{aX} S.Camalet, Monogamy inequality 
for entanglement and local contextuality, 
Phys. Rev. A {\bf 95}, 062329 (2017).

\bibitem{S} M. Schlosshauer, Decoherence, 
the measurement problem, and interpretations 
of quantum mechanics, 
Rev. Mod. Phys. {\bf 76}, 1267 (2004).

\bibitem{PRA2} S. Camalet, Effectively classically 
correlated state of a measured system and 
a bosonic measurement apparatus, 
Phys. Rev. A {\bf 83}, 042106 (2011).

\bibitem{MOA} A. W. Marshall, I. Olkin, 
and B.C. Arnold, {\it Inequalities: Theory 
of Majorization and its Applications}, Second 
edition, Springer Series in Statistics 
(Springer, New York, 2011). 

\bibitem{K} A.I. Khinchin, {\it Mathematical 
Foundations of Information Theory} 
(Dover, New York, 1957). 

\bibitem{BZHPL} G.M. Bosyk, S. Zozor, F. Holik, 
M. Portesi and P.W. Lamberti, A family of 
generalized quantum entropies: definition and 
properties, Quant. Inf. Proc. {\bf 15}, 3393 (2016).

\bibitem{SM} See Supplemental Material for 
the proofs, which includes Refs. \cite{HUL,F}.

\bibitem{HUL} J.-B. Hiriart-Urruty, and 
C. Lemar\'echal, {\it Fundamentals of Convex 
Analysis} (Springer, Berlin, 2001).

\bibitem{F} K. Fan, On a theorem of Weyl
concerning eigenvalues of linear transformations I., 
Proc. Nat. Acad. Sci. U.S.A. {\bf 35}, 652 (1949).

\bibitem{RBCHM} P. Rungta, V. Bu\v zek, 
C.M. Caves, M. Hillery, and G.J. Milburn, 
Universal state inversion and concurrence in 
arbitrary dimensions, Phys. Rev. A {\bf 64}, 
042315 (2001).

\bibitem{KZ} M. Ku\'s, and K. \.Zyczkowski, 
Geometry of entangled states, 
Phys. Rev. A {\bf 63}, 032307 (2001).

\bibitem{VADM} F. Verstraete, K. Audenaert, 
and B. De Moor, Maximally entangled mixed 
states of two qubits, 
Phys. Rev. A {\bf 64}, 012316 (2001).

\bibitem{ZHSL} K. \. Zyczkowski, P. Horodecki, 
A. Sanpera and M. Lewenstein, Volume of the set 
of separable states, Phys. Rev. A {\bf 58}, 883 (1998).

\bibitem{ViWe} G. Vidal and R.F. Werner, 
Computable measure of entanglement, 
Phys. Rev. A {\bf 65}, 032314 (2002).

\bibitem{CAF} K. Chen, S. Albeverio, and S.-M. Fei, 
Entanglement of Formation of Bipartite Quantum States, 
Phys. Rev. Lett. {\bf 95}, 210501 (2005).

\bibitem{OF} Y.-C. Ou and H. Fan, Monogamy inequality 
in terms of negativity for three-qubit states, Phys. Rev. A 
{\bf 75}, 062308 (2007).

\bibitem{BBPS} C.H. Bennett, H. J. Bernstein, 
S. Popescu, and B. Schumacher, Concentrating 
partial entanglement by local operations, 
Phys. Rev. A {\bf 53}, 2046 (1996).

\bibitem{VP} V. Vedral, and M. B. Plenio, 
Entanglement measures and purification procedures, 
Phys. Rev. A {\bf 57}, 1619 (1998).

\bibitem{YDGLS} Yao Yao, G. H. Dong, Li Ge, 
Mo Li, and C. P. Sun, Maximal coherence in 
a generic basis, Phys. Rev. A {\bf 94}, 062339 (2016).

\bibitem{E} T. Endo, Quantum Theory of Phase 
Relaxation, J. Phys. Soc. Jpn. {\bf 56}, 1684 (1987).

\end{thebibliography}

\begin{thebibliography}{99}

\bibitem{HULp} J.-B. Hiriart-Urruty, and 
C. Lemar\'echal, {\it Fundamentals of Convex 
Analysis} (Springer, Berlin, 2001).

\bibitem{MOAp} A. W. Marshall, I. Olkin, 
and B.C. Arnold, {\it Inequalities: Theory 
of Majorization and its Applications}, Second 
edition, Springer Series in Statistics 
(Springer, New York, 2011). 

\bibitem{Fp} K. Fan, On a theorem of Weyl
concerning eigenvalues of linear transformations I., 
Proc. Nat. Acad. Sci. U.S.A. {\bf 35}, 652 (1949).

\bibitem{VADMp} F. Verstraete, K. Audenaert, 
and B. De Moor, Maximally entangled mixed 
states of two qubits, 
Phys. Rev. A {\bf 64}, 012316 (2001).

\end{thebibliography}
\end{document}